\documentclass[letterpaper,11pt]{article}

\usepackage{latexsym,graphics,amssymb,setspace,amsmath,psfrag}

\usepackage{fullpage}

\newcommand{\vcm}[1][1]{\vspace*{#1 cm}}
\newcommand{\hcm}[1][1]{\hspace*{#1 cm}}

\newcommand{\ceil}[1]{\left\lceil #1 \right\rceil}
\newcommand{\floor}[1]{\left\lfloor #1 \right\rfloor}
\newcommand{\f}[2]{\frac{#1}{#2}}
\newcommand{\fr}[2]{\mbox{$\frac{#1}{#2}$}}
\newcommand{\logstar}{\log^{*}}

\newcommand{\lca}{\operatorname{lca}}
\newcommand{\sens}{\operatorname{sens}}
\newcommand{\parent}{\operatorname{parent}}
\newcommand{\initialize}{\operatorname{init}}
\newcommand{\decreasekey}{\operatorname{decreasekey}}
\newcommand{\splitseq}{\operatorname{split}}
\newcommand{\findmin}{\operatorname{findmin}}
\newcommand{\sfm}{\mathsf{SF}}
\newcommand{\MST}{\mbox{\sc mst}}

\newcommand{\Komlos}{Koml\'{o}s}

\newcommand{\ignore}[1]{}

\newenvironment{proof}{\noindent {\bf Proof:}}{$\Box$}
\newtheorem{theorem}{Theorem}
\newtheorem{lemma}[theorem]{Lemma}

\title{Sensitivity Analysis of Minimum Spanning Trees in
Sub-Inverse-Ackermann Time\thanks{This work is supported by
NSF grants CCF-1217338 and CNS-1318294, and a grant from the US-Israel Binational Science Foundation.
An extended abstract appeared in the proceedings of ISAAC 2005.}}

\author{Seth Pettie\\University of Michigan}

\date{}

\begin{document}

\maketitle

\begin{abstract}
We present a deterministic algorithm for computing
the {\em sensitivity} of a minimum spanning tree (MST) or shortest path tree in 
$O(m\log\alpha(m,n))$ time, where $\alpha$ is the inverse-Ackermann
function.  This improves upon a long standing bound of $O(m\alpha(m,n))$
established by Tarjan.  Our algorithms are based on an efficient 
{\em split-findmin} data structure, which maintains a collection of 
sequences of weighted elements that may be split into smaller subsequences.
As far as we are aware, our split-findmin algorithm is the first with
superlinear but sub-inverse-Ackermann complexity.  

We also give a reduction from MST sensitivity 
to the MST problem itself.  Together with the randomized
linear time MST algorithm of Karger, Klein, and Tarjan, this gives another
randomized linear time MST sensitivity algoritm.
\end{abstract}

\section{Introduction}\label{sect:intro}

{\em Split-findmin} is a little known but key data structure in modern graph 
optimization algorithms.  It was originally designed for use in the
weighted matching and undirected all-pairs shortest path algorithms of Gabow and 
Tarjan~\cite{G85,GT91} and
has since been rediscovered as a critical component of the hierarchy-based
shortest path algorithms of Thorup \cite{Tho99}, Hagerup \cite{Hag00},
Pettie-Ramachandran \cite{PR-usssp}, and Pettie \cite{Pet02a,Pet02b}.
In this paper we apply split-findmin to the problem of performing
{\em sensitivity analysis} on minimum spanning trees (MST) and shortest path trees.
The MST sensitivity analysis problem is, given a graph $G$ and minimum spanning 
tree $T = \MST(G)$, to decide how much each individual edge weight can be perturbed 
without invalidating the identity $T=\MST(G)$.  

A thirty year old result of Tarjan \cite{Tar82} shows that
MST sensitivity analysis can be solved in $O(m\alpha(m,n))$ time, where $m$
is the number of edges, $n$ the number of vertices, and $\alpha$ the inverse-Ackermann function.  
Furthermore, he showed
that single-source shortest path sensitivity analysis can be reduced
to MST sensitivity analysis in linear time.
Tarjan's algorithm has not seen any unqualified improvements,
though Dixon et al.~\cite{DRT92} did present two MST sensitivity algorithms,
one running in {\em expected} linear time and another which is deterministic and
provably optimal, but whose complexity is only known to be bounded by $O(m\alpha(m,n))$.

In this paper we present a new MST 
sensitivity analysis algorithm running in $O(m\log\alpha(m,n))$ time.
Given the notoriously slow growth of the inverse-Ackermann function,
an improvement on the order of $\alpha/\log\alpha$ is unlikely to have
a devastating real-world impact.  Although our algorithm is simpler
and may very well be empirically 
faster than the competition, its real significance has little to do with 
practical issues, nor does it have much to do with the sensitivity
problem as such.  As one may observe in Figure~\ref{fig:incestuous},
the MST sensitivity analysis problem is related, via a tangled web of reductions, 
to many fundamental algorithmic and data structuring problems.
Among those depicted in Figure~\ref{fig:incestuous}, the two most
important unsolved problems are {\em set maxima} and
the {\em minimum spanning tree} problem.   MST sensitivity can be expressed
as a set maxima problem.  In this paper we show that MST sensitivity
is reducible to {\em both} the minimum spanning tree problem itself and 
the split-findmin data structure.  These connections suggest that it
is impossible to make progress on important optimization problems,
such as minimum spanning trees and single-source shortest paths,
without first understanding {\em why} a manifestly simpler problem like
MST sensitivity still eludes us.  In other words,
MST sensitivity should function as a test bed problem for 
experimenting with new approaches to solving its higher profile cousins.

\paragraph{Organization.}
In Section \ref{sect:theproblems} we define all the MST related problems
and data structures mentioned in Figure~\ref{fig:incestuous}.
In Section \ref{sect:sensitivity} we define the split-findmin data structure
and give new algorithms for MST sensitivity analysis 
and single-source shortest path sensitivity analysis.  
Section \ref{sect:reduction} gives a complexity-preserving
reduction from MST sensitivity analysis to the MST problem.
In Section \ref{sect:sf} we present a faster split-findmin data structure.

\begin{figure}[th]
\psfrag{Minimum Spanning Tree}{\huge Minimum Spanning Tree}
\psfrag{MST-Ver-Nontree}{\Large MST-Verification-Nontree}
\psfrag{MST-Ver-Tree}{\Large MST-Verification-Tree}
\psfrag{MST-Sens-Nontree}{\Large MST-Sensitivity-Nontree}
\psfrag{MST-Sens-Tree}{\Large MST-Sensitivity-Tree}
\psfrag{Set Maxima}{\Large Set Maxima}
\psfrag{Soft Heap Verification}{\Large Soft Heap Verification}
\psfrag{Online MST-Verification}{\Large Online MST Verification}
\psfrag{Interval-Max Data Struct}{\Large Interval-Max Data Struct.}
\psfrag{Least Common Ancestor Data Struct}{\Large LCA Data Struct.}
\psfrag{SSSP-Sensitivity}{\Large SSSP Sensitivity}
\psfrag{Split-Findmin Data Struct}{\Large Split-Findmin Data Struct.}
\psfrag{Undirected single-source shortest paths}{Undirected single-source shortest paths}
\psfrag{(Un)Directed all-pairs shortest paths}{(Un)Directed all-pairs shortest paths}
\psfrag{Weighted Matching}{Weighted matching}
\psfrag{Randomized: O(m) w.h.p.}{Randomized: O(m) w.h.p.}
\psfrag{Deterministic: optimal but unknown}{Deterministic: optimal but unknown}
\psfrag{straightline}{straightline}
\psfrag{programs}{programs}
\psfrag{randomized}{randomized}
\psfrag{+ O(nlogn)}{$+ O(n\log n)$}
\psfrag{O(malpha)}{$O(m\alpha(m,n))$}
\psfrag{Theta(m)}{$\Theta(m)$}
\psfrag{Theta(malpha)}{$\Theta(m\alpha(m,n))$}
\psfrag{O(mlogalpha)}{$O(m\log\alpha(m,n))$}
\centerline{\scalebox{.75}{\includegraphics{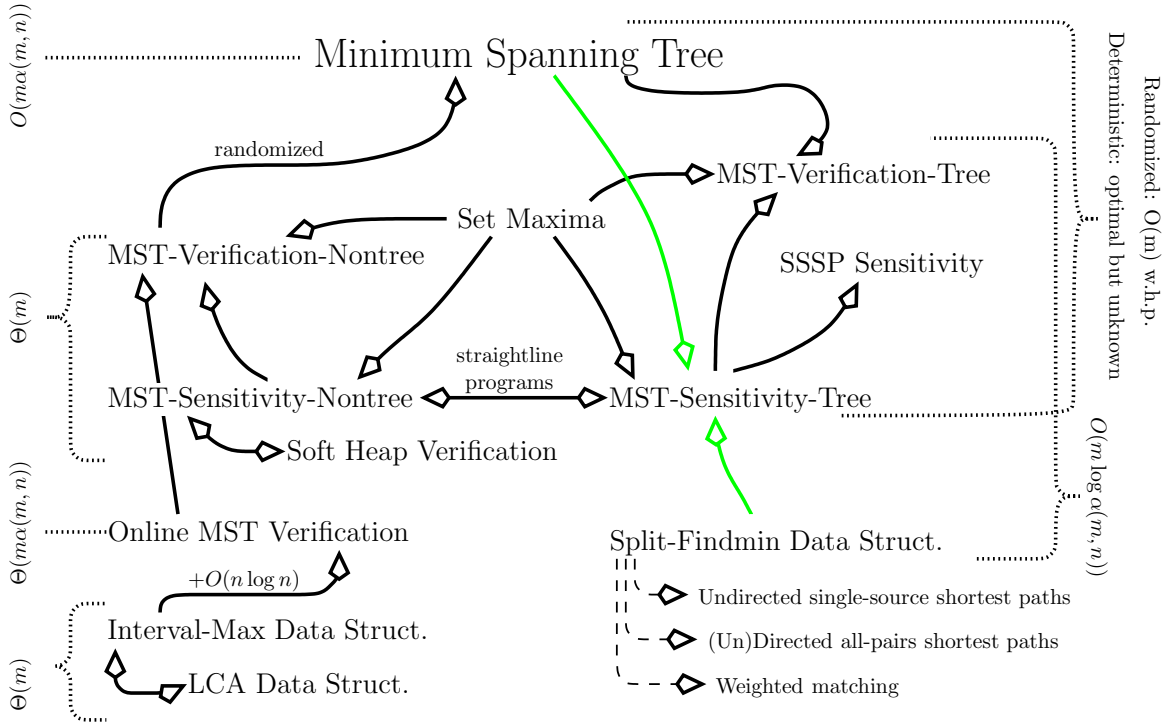}}}
\caption{\label{fig:incestuous}
The incestuous family of minimum spanning tree related
problems.  Here $\mathcal{A}\longrightarrow \mathcal{B}$ means that
problem $\mathcal{B}$ can be solved by an algorithm that uses standard
linear-time 
routines (graph contraction, least common ancestor computations, etc.)
plus calls to an algorithm for problem $\mathcal{A}$.
The dashed arrows emanating from {\em split-findmin} are not reductions;
they are only meant
to illustrate other applications of the data structure.
Some arrows are labeled with properties of the reduction.  
For instance, the reduction \cite{KKT95}
from the minimum spanning tree problem to 
MST verification of non-tree edges is randomized
and the equivalence between MST sensitivity analysis
for tree and non-tree edges only holds for straight-line programs.
The reduction from Online MST Verification to Online Interval-Max
incurs an $O(n\log n)$ term.  
}
\end{figure}

\subsection{The Problems}\label{sect:theproblems}

\begin{description}
\item[Minimum Spanning Tree.]  Given a connected undirected graph
$G=(V,E,w)$, find the spanning tree $T\subseteq E$ minimizing
$w(T) = \sum_{e\in T} w(e)$.  (For simplicity, assume throughout
that edge weights are distinct.)
In finding and verifying MSTs
there are two useful properties to keep in mind.  
{\em The Cut Property:}
the lightest edge crossing the cut $(V', V\backslash V')$ is
in $\MST(G)$, for any $V'\subset V$.  
{\em The Cycle Property:} the heaviest edge on any cycle is not in $\MST(G)$.
The best bound on the deterministic complexity of this problem is
$O(m\alpha(m,n))$, due to Chazelle \cite{Chaz00a}.  
Karger et al.~\cite{KKT95} presented a randomized MST algorithm
running in expected linear time (see also \cite{PR-randmst08})
and Pettie and Ramachandran \cite{PR02c} 
gave a deterministic MST algorithm whose 
running time is optimal and equal to the decision tree complexity of the MST problem,
i.e., somewhere between $\Omega(m)$ and $O(m\alpha(m,n))$.
See Graham and Hell \cite{GH85} for a survey on the early history of
the MST problem and Mares~\cite{Mares08} for a more recent survey.

\item[MST Verification.] 
We are given a graph $G=(V,E,w)$ and a (not necessarily minimum) spanning tree $T\subset E$.
For $e\not\in T$ let $C(e)\cup\{e\}$ be the unique cycle in $T\cup\{e\}$
and for $e\in T$ let $C^{-1}(e) = \{f\not\in T \::\: e\in C(f)\}$.
That is, $C(e)$ is the path in $T$ connecting the endpoints of $e$
and $C^{-1}(e)$ is the set of non-tree edges crossing the cut 
defined by $T\backslash\{e\}$.
In the non-tree-edge half of the problem we decide for each
$e\not\in T$ whether $e\in \MST(T\cup\{e\})$, which, by the cycle property, 
is tantamount
to deciding whether $w(e) < \max_{f\in C(e)}\{w(f)\}$.
The tree-edge version of the problem is dual: 
for each $e\in T$ we decide whether
$w(T\backslash\{e\}\cup\{f\}) < w(T)$ for some $f\in C^{-1}(e)$.

\item[MST/SSSP Sensitivity Analysis.]
We are given a weighted graph $G$ and tree $T = \MST(G)$.  The
sensitivity analysis problem is to decide how much each individual
edge weight can be altered without invalidating the identity $T=\MST(G)$.
By the cut and cycle properties it follows that we must compute
for each edge $e$:
\[
\sens(e) = \left\{
\begin{array}{l@{\hcm[.5]}l}
\max_{f\in C(e)}\{w(f)\}           & \mbox{ for $e\not\in T$}\\
&\\
\min_{f\in C^{-1}(e)}\{w(f)\}      & \mbox{ for $e\in T$}
\end{array}
\right.
\]
\noindent where $\min\emptyset = \infty$.
One can see that a non-tree edge $e$ can be increased in weight
arbitrarily or reduced by less than $w(e)-\sens(e)$.
Similarly, if $e$ is a tree-edge it can be reduced arbitrarily 
or increased by less than $\sens(e) - w(e)$.

\Komlos~\cite{Kom85} demonstrated that verification and sensitivity
analysis of {\em non-tree} edges requires a linear number of comparisons.
Linear {\em time} implementations of \Komlos's algorithm were provided
by Dixon et al.~\cite{DRT92}, King~\cite{K97}, Buchsbaum et al.~\cite{B+98}, and Hagerup~\cite{Hagerup09}.
In earlier work Tarjan \cite{Tar79b} gave a
verification/sensitivity analysis algorithm for non-tree edges that runs in time $O(m\alpha(m,n))$
and showed, furthermore, that it could be transformed \cite{Tar82} into
a verification/sensitivity analysis algorithm for tree edges with identical complexity.
(This transformation works only with straightline/oblivious algorithms and cannot
be applied to \Komlos's algorithm.)  Tarjan \cite{Tar82}
also gave a linear time reduction from
single-source shortest path sensitivity analysis to MST sensitivity analysis.

\item[Online MST Verification.]
Given a weighted tree $T$ we must preprocess it in some way so as 
to answer online queries of the form: for $e\not\in T$,
is $e\in \MST(T\cup\{e\})$?   The query edges 
$e$ are not known in advance.  
Pettie \cite{PetInvAck} proved that any data structure answering
$m$ queries must take $\Omega(m\alpha(m,n))$ time, where $n$ is the 
size of the tree.  This bound is tight \cite{Chaz87,AS87}.

\item[Soft Heap Verification.]
Pettie and Ramachandran \cite{PR-randmst08} consider a {\em generic}
soft heap (in contrast to Chazelle's concrete data structure \cite{Chaz00b})
to be any data structure that supports the priority queue operations
insert, meld, delete, and findmin, without the obligation that findmin queries
be answered correctly.  Any element that {\em bears witness} to the fact
that a findmin query was answered incorrectly is by definition {\em corrupted}.
The soft heap verification problem is, given a transcript of priority queue
operations, including their arguments and outputs, to identify the corrupted elements.
It was observed in \cite{PR-randmst08} that there are mutual reductions
between soft heap verification and MST verification (non-tree edges), 
and consequently, that soft heap verification can be solved in linear time.

\item[Online Interval-Max.]
The problem is to preprocess a sequence $(e_1,\ldots,e_n)$
of elements from a total
order such that for any two indices $\ell < r$, 
$\max_{\ell \le i \le r} \{e_i\}$ can be reported quickly.
It is known \cite{GBT84,Kom85,BF-C00} 
that answering interval-max or -min 
queries is exactly the problem of answering least common ancestor (LCA)
queries and that with linear preprocessing both
types of queries can be handled in constant time.
Katriel et al.~\cite{Katriel+03} also proved
that with an additional $O(n\log n)$ time preprocessing, 
the Online MST Verification problem can be reduced to Online Interval-Max.
This reduction was used in their minimum spanning tree algorithm.

\item[Set Maxima.]
The input is a set system (or hypergraph) $(\chi,\mathcal{S})$ where
$\chi$ is a set of $n$ weighted elements and $\mathcal{S} = \{S_1,\ldots,S_m\}$
is a collection of $m$ subsets of $\chi$.  The problem is to compute
$\{\max S_1, \ldots, \max S_m\}$ by comparing elements of $\chi$.
Goddard et al.~\cite{GKKS93} gave a {\em randomized} algorithm for set maxima
that performs $O(\min\{n\log(2\ceil{m/n}),\, n\log n\})$ comparisons, which is
optimal.  Although the dependence on randomness can be reduced \cite{PR02b},
no one has yet to produce a non-trivial deterministic algorithm.  The
current bound of $\min\{n\log n, \sum_{i=1}^m |S_i|, n + m2^m\}$ comes
from applying one of three trivial algorithms.  A natural special case of
set maxima is {\em local sorting}, where $(\chi,\mathcal{S})$ is a graph, that is, $|S_i|=2$ for all $i$.
All instances of MST verification and sensitivity analysis are reducible to 
set maxima.  In the non-tree-edge version of these problems $n$ and $m$ refer 
to the number of vertices and edges, respectively, and in their tree-edge versions
the roles of $n$ and $m$ are reversed.

\item[Split-Findmin.]
The problem is to maintain a set of disjoint sequences of weighted elements
such that the minimum weight element in each sequence is known at all times.  
(A precise definition appears in Section~\ref{sect:sensitivity}.)
The data structure can be updated in two ways: we can decrease
the weight of any element and can split any sequence into two contiguous 
subsequences.
Split-findmin could be regarded as a weighted version of 
split-find~\cite{LaP96}, which is itself a time-reversed version
of union-find~\cite{Tar75}.  
On a pointer machine union-find, split-find, and split-findmin
all have $\Omega(n+ m\alpha(m,n))$ lower bounds \cite{Tar79,LaP96}, 
where $m$ is the number of operations and $n$ the size of the structure.
The same lower bound applies to union-find \cite{FredmanS89} in the
cell-probe and RAM models.  Split-find, on the other hand, admits
a trivial linear time algorithm in the RAM model; 
see Gabow and Tarjan for the technique \cite{GT85}.
The results of this paper establish that the {\em comparison} complexity
of split-findmin is $O(n + m\log\alpha(m,n))$ and that on a RAM
there is a data structure with the same running time.
\end{description}

\paragraph{Decision Tree vs.~Semigroup Complexity.}
Many of the problems we discussed can be described
in terms of {\rm range searching} over the semigroups 
$(\mathbb{R},\max)$ and $(\mathbb{R},\min)$, where
the ranges correspond to paths.  In interval-max
and split-findmin the underlying space is 1-dimensional 
and in the non-tree-edge versions of MST Verification/Sensitivity analysis
it is a tree.  Under the assumption of an {\em arbitrary}
associative semigroup the complexities of all these problems
changes slightly.
Chazelle and Rosenberg \cite{CR91} proved that for some instances
of offline interval-sum, with $n$ elements and $m$ queries,
any algorithm must apply the semigroup's sum operator 
$\Omega(m\alpha(m,n))$ times.  The same lower bound obviously
applies to sums over tree paths.  Gabow's split-findmin \cite{G85} structure
actually solves the problem for any commutative group in
$O((n + m)\alpha(m,n))$ time.

\section{Sensitivity Analysis \& Split-Findmin}\label{sect:sensitivity}

The Split-Findmin structure maintains a set of sequences
of weighted elements.  It supports the following operations:

\begin{description}
\item[$\initialize(e_1,e_2,\ldots,e_n):$] Initialize the sequence
set $\mathcal{S} \leftarrow \{(e_1,e_2,\ldots,e_n)\}$ with 
$\kappa(e_i) \leftarrow \infty$ for all $i$.  $S(e_i)$ denotes the
unique sequence in $\mathcal{S}$ containing $e_i$.

\item[$\splitseq(e_i):$]  Let $S(e_i) = 
(e_j,\ldots,e_{i-1},e_i,\ldots,e_k)$.
Set $\mathcal{S} \leftarrow \mathcal{S} \backslash S(e_i) \cup 
\{(e_j,\ldots,e_{i-1}), (e_i,\ldots,e_k)\}$.

\item[$\findmin(e):$]  Return $\min_{f\in S(e)} \{\kappa(f)\}$.

\item[$\decreasekey(e,w):$]  Set $\kappa(e) \leftarrow \min\{\kappa(e),w\}$.
\end{description}

In Section \ref{sect:sf} we give a data structure that maintains
the minimum element in each sequence at all times.  Decreasekeys
are executed in $O(\log\alpha(m,n))$ worst-case time and
splits take $O(m/n)$ amortized time.

\subsection{Sensitivity Analysis in Sub-Inverse-Ackermann Time}

\Komlos's algorithm \cite{Kom85} and its efficient implementations
\cite{DRT92,K97,B+98} compute the sensitivity of {\em non-tree} edges in linear time.
In this subsection we calculate the sensitivity of tree edges.
We create a split-findmin structure where the initial sequence
consists of a list of the vertices in some preorder, with respect to an
arbitrary root vertex.  In general the sequences will correspond
to single vertices or subtrees of the MST.  We maintain the
invariant (through appropriate decreasekey operations) that
$\kappa(v)$ corresponds to the minimum weight edge incident to 
$v$ crossing the cut $(S(v), V\backslash S(v))$.  
If $r$ is the root of the subtree corresponding to $S(v)$
then the sensitivity of the edge $(r,\parent(r))$ can be
calculated directly from the minimum among $S(r)$.

\begin{description}
\item[Step 1.] Root the spanning tree at an arbitrary vertex; the
ancestor relation is with respect to this orientation.
For each non-tree edge $(u,v)$, unless $v$ is an ancestor of $u$
or the reverse, replace $(u,v)$ with
$(u,\lca(u,v))$ and $(v,\lca(u,v))$, 
where the new edges inherit the weight of the old.
If we have introduced multiple edges between the same endpoints
we discard all but the lightest.

\item[Step 2.]
$\initialize(u_1,\ldots,u_n)$, 
where $u_i$ is the vertex with pre-order number $i$.\footnote{Recall
that the {\em pre-order} is the order in which vertices are first
visited in some depth-first search of the tree.}  Note that for
any subtree, the pre-order numbers of vertices in that subtree form
an unbroken interval.

\item[Step 3.]\ \\
\vcm[-.7]
\begin{tabbing}
\hcm[1]\=\hcm\=\hcm\=\kill
3.1\> For $i$ from $1 $ to $ n$\\
3.2\>\>If $i>1$, $\sens(u_i,\parent(u_i)) \leftarrow \findmin(u_i)$\\
3.3\>\>Let $u_{c_1},\ldots,u_{c_\ell}$ be the children of $u_i$\\
3.4\>\>For $j$ from $1 $ to $\ell$,\\
3.5\>\>\>$\splitseq(u_{c_j})$\\
3.6\>\>For all non-tree edges $(u_k,u_i)$ where $k>i$\\
3.7\>\>\>$\decreasekey(u_k, w(u_k,u_i))$
\end{tabbing}
\end{description}

The following lemma is used to prove that correct $\sens$-values
are assigned in step 3.2.

\begin{lemma}\label{lem:sens-invariant}
Let $(u_j,\ldots,u_i,\ldots)$ be the sequence in the split-findmin
structure containing $u_i$, after an arbitrary number of iterations
of steps 3.1--3.7.  Then this sequence contains exactly
those vertices in the subtree rooted at $u_j$ and:
\[
\kappa(u_i) = \min\{w(u_i,u_k) \;:\;  k < j \mbox{ and } (u_i,u_k) \in E\}
\]
where $\min\emptyset = \infty$.  Furthermore, just before the $i$th
iteration $i=j$.
\end{lemma}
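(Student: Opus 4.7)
My plan is to prove this by induction on the iteration number $i$, maintaining the stronger invariant that the stated condition (both the structural description and the $\kappa$ formula) holds simultaneously for \emph{every} sequence currently in the split-findmin structure, not just the one containing the designated $u_i$. The base case holds trivially after $\initialize$: the single sequence $(u_1,\ldots,u_n)$ equals the subtree rooted at $u_1$ (since $u_1$ is the root), so $j=1=i$, and all $\kappa$-values are $\infty = \min \emptyset$.

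For the structural part of the inductive step, the preorder property that every subtree occupies a contiguous interval of indices makes it straightforward to track what the splits do. Given that $S(u_i)=\text{subtree}(u_i)$ in preorder at the start of iteration $i$, the splits at $u_{c_1}, u_{c_2}, \ldots, u_{c_\ell}$ carve $S(u_i)$ cleanly into the singleton $\{u_i\}$ followed by the subtrees of the children in order. Thus $u_{i+1}$ is either $u_{c_1}$ (first child), naturally starting the new sequence $\text{subtree}(u_{c_1})$, or, if $u_i$ is a leaf, $u_{i+1}$ is the first vertex of some sequence untouched by iteration $i$ (a sequence that was created when $u_{i+1}$'s parent was processed in an earlier iteration), so it still starts with $u_{i+1}$.

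The main work is verifying the $\kappa$ formula after iteration $i$ for a vertex $u_k$ in the subtree of some child $u_{c_p}$. The new invariant demands $\kappa(u_k) = \min\{w(u_k,u_\ell) : \ell < c_p,\ (u_k,u_\ell)\in E\}$, whereas the IH only guarantees $\kappa(u_k) = \min\{w(u_k,u_\ell) : \ell < i\}$ before iteration $i$. Steps 3.6--3.7 introduce $w(u_k,u_i)$ into this minimum whenever the edge $(u_k,u_i)$ exists, yielding the min over all $\ell \le i$. To close the gap, I will argue that no non-tree edge $(u_k,u_\ell)$ with $i < \ell < c_p$ exists at all, so extending the range from $\ell \le i$ to $\ell < c_p$ adds nothing.

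The main obstacle, and the crux of the whole argument, is this non-existence claim, which is precisely why Step 1 performed the LCA replacement. After Step 1, every non-tree edge is \emph{vertical}: one endpoint is a tree-ancestor of the other. An index $\ell$ with $i<\ell<c_p$ corresponds to a vertex in the subtree of an earlier sibling $u_{c_q}$ ($q<p$). But a vertex in $\text{subtree}(u_{c_p})$ and a vertex in $\text{subtree}(u_{c_q})$ are incomparable in the ancestor relation (their tree-paths meet only at $u_i$ or above), contradicting verticality. Hence no such non-tree edge can exist, the two minima agree, and the invariant is restored. For the special case $u_k=u_i$ in the singleton $\{u_i\}$, the value $\kappa(u_i)$ is unchanged and the required formula ($j=i$) coincides with the old one; for sequences disjoint from $\text{subtree}(u_i)$, nothing has changed and the IH applies directly.
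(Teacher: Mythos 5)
Your proof is correct and uses essentially the same approach as the paper: both hinge on the observation that after the LCA replacement in Step~1, every non-tree edge joins a vertex to a tree-ancestor, so the only new edge that can enter the minimum when a child's subtree is split off from $u_i$'s is $(u_k,u_i)$ itself, which Step~3.7 handles. The only difference is cosmetic --- you induct on the iteration counter and maintain a global invariant over all live sequences, while the paper inducts on tree ancestry (parent to children) and adds a short remark that the intervening iterations $i+1,\ldots,c_j-1$ cannot touch $u_{c_j}$'s sequence; your global-invariant framing absorbs that remark automatically, but the underlying reasoning is identical.
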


\begin{proof}
By induction on the ancestry of the tree.  
The lemma clearly holds for $i=1$, where $u_1$ is the root of the tree.
For $i>1$ the sequence containing $i$ is, by the induction hypothesis,
$(u_i,u_{c_1},\ldots,u_{c_2},\ldots,u_{c_\ell},\ldots)$.
We only need to show that the combination of the $\splitseq$s in Step 3.5
and the $\decreasekey$s in Step 3.7 ensure that the induction
hypothesis holds for iterations 
$u_{c_1},u_{c_2},u_{c_3},\ldots,u_{c_\ell}$ as well.
After performing 
$\splitseq(u_{c_1}),\ldots,\splitseq(u_{c_\ell})$
the sequences containing $u_{c_1},u_{c_2},\ldots,u_{c_\ell}$ 
clearly correspond to their respective subtrees.
Let $u_{c_j}$ be any child of $u_i$ and $u_k$ be any
vertex in the subtree of $u_{c_j}$.  
Before Step 3.7 we know, by the induction hypothesis, that:
\[
\kappa(u_k) = \min\{w(u_k,u_\nu)\;:\; \nu < i \mbox{ and } (u_k,u_\nu) \in E\}
\]
\noindent To finish the induction we must show that after Step 3.7,
$\kappa(u_k)$ is correct with respect to its new sequence beginning with $u_{c_j}$.
That is, we must consider all edges $(u_k,u_\nu)$ with $\nu < c_j$ rather than $\nu < i$.  
Since the graph is simple and all edges connect nodes to their ancestors there
can be only one edge that might affect $\kappa(u_k)$, namely $(u_k,u_i)$.
After performing $\decreasekey(u_k, w(u_k,u_i))$ we have restored the invariant
with respect to $u_k$.  Since the $i$th iteration of Step 3.1
only performs $\splitseq$s and $\decreasekey$s on elements in 
the subtree of $u_i$, all iterations in the interval
$i+1,\ldots,c_j-1$ do not have any influence on $u_{c_j}$'s sequence.
\end{proof}

\begin{theorem}
The sensitivity of a minimum spanning tree or single-source
shortest path tree can be computed in
$O(m\log\alpha(m,n))$ time, where $m$ is the number 
of edges, $n$ the number of vertices, and $\alpha$
the inverse-Ackermann function.
\end{theorem}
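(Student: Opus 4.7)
The plan is to combine the invariant of Lemma~\ref{lem:sens-invariant} with an accounting of split-findmin operations against the bounds promised in Section~\ref{sect:sf}. For correctness, I would apply the lemma at the start of iteration~$i$ (where $j=i$) to conclude that the sequence $S(u_i)$ consists of exactly the subtree rooted at $u_i$, and that for every $u_k$ in that subtree $\kappa(u_k)$ equals the minimum weight of a transformed edge from $u_k$ to some $u_\nu$ with $\nu<i$. Because Step~1 arranged for every transformed non-tree edge to connect a vertex to one of its ancestors, $\nu<i$ together with $u_\nu$ being an ancestor of $u_k\in\mathrm{subtree}(u_i)$ forces $u_\nu$ to be a proper ancestor of $u_i$. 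Therefore $\findmin(u_i)$ returns the lightest transformed edge leaving $\mathrm{subtree}(u_i)$, and all that remains is to match this quantity to $\sens(u_i,\parent(u_i))$.

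The single delicate step of the correctness argument---the piece I expect to demand the most care---is verifying that Step~1's lca-transformation preserves the minimum crossing weight for every cut of the form $(\mathrm{subtree}(u_i), V\setminus \mathrm{subtree}(u_i))$. For this I would argue that an original non-tree edge $(u,v)$ crosses such a cut iff $\ell=\lca(u,v)$ is a proper ancestor of $u_i$, in which case the equal-weight replacement edge from whichever of $u,v$ is a descendant of $u_i$ to $\ell$ also crosses the cut; conversely, if $\ell$ lies in $\mathrm{subtree}(u_i)$ then neither the original nor either replacement edge crosses the cut. Discarding duplicate edges is harmless because only the minimum weight matters.

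The running-time analysis is a counting exercise. Step~1 runs in $O(n+m)$ time using a linear-time LCA algorithm and at most doubles the non-tree edge count. Step~2 is one $\initialize$ on $n$ elements. Across all iterations of Step~3 there are at most $n-1$ calls to $\splitseq$ (one per non-root vertex), $n-1$ calls to $\findmin$, and $O(m)$ calls to $\decreasekey$ (one per surviving transformed edge). Plugging in the bounds of Section~\ref{sect:sf}---$O(n)$ initialization, $O(m/n)$ amortized per split (hence $O(m)$ total), $O(1)$ per findmin, and $O(\log\alpha(m,n))$ worst-case per decreasekey---the decreasekeys dominate and the total cost is $O(m\log\alpha(m,n))$. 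The single-source shortest path sensitivity bound then follows by precomposing with Tarjan's~\cite{Tar82} linear-time reduction from SSSP sensitivity to MST sensitivity.
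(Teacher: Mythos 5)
Your proof is correct and follows essentially the same approach as the paper: apply Lemma~\ref{lem:sens-invariant} at iteration $i$ to show $\findmin(u_i)$ returns the lightest edge leaving $\mathrm{subtree}(u_i)$, then count $O(n)$ splits and findmins and $O(m)$ decreasekeys against the bounds of Theorem~\ref{thm:sf}. The one place you go beyond the paper is in carefully verifying that the lca-transformation of Step~1 preserves the minimum crossing weight for every subtree cut, a point the paper treats as self-evident.
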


\begin{proof}
{\em Correctness.}  Clearly Step 1 at most 
doubles the number of edges and does not affect the 
sensitivity of any MST edge.  In
iteration $i$, $\sens(u_i,\parent(u_i))$ is set to $\findmin(u_i)$,
which is, according to Lemma~\ref{lem:sens-invariant}:
\[
\min_{\mbox{$v$ desc.~of $u_i$}} \kappa(v)
= \min_{\mbox{$v$ desc.~of $u_i$}}
\{w(v,u_k) \;:\;  k < i  \mbox{ and } (v,u_k) \in E\}
\]
\noindent which is precisely the minimum weight of any edge
whose fundamental cycle includes $(u_i,\parent(u_i))$.\\
{\em Running time.} Step 1 requires that we compute 
the least common ancestor for every pair $(u,v)\in E$.
This takes linear time \cite{HT84,BF-C00,B+98}.  
Step 2 computes a pre-order numbering in $O(n)$ time.
After Step 1 the number of non-tree edges is at most $2(m-n+1)$.
In Step 3 each non-tree edge induces one $\decreasekey$ and each 
tree vertex induces one $\findmin$ and one $\splitseq$.
By Theorem~\ref{thm:sf} the total cost of all split-findmin
operations is $O(m\log\alpha(m,n))$.
\end{proof}

\subsection{Sensitivity Analysis via Minimum Spanning Trees}\label{sect:reduction}

In this section we give a reduction from MST sensitivity analysis to the MST problem itself.
By plugging in the
Pettie-Ramachandran algorithm \cite{PR02c} this implies that the
algorithmic \& decision-tree complexities of MST sensitivity are {\em no more}
than their counterparts for the MST problem.  
By plugging in the randomized
MST algorithm of Karger et al.~\cite{KKT95} we obtain
an alternative to the randomized MST sensitivity algorithm of Dixon et al.~\cite{DRT92}.

Our reduction proceeds from a simple observation.  
Let $e\in T = \MST(G)$ be some MST edge and $V_0$ and $V_1$ 
be the connected vertex sets in $T\backslash\{e\}$.
By definition $\sens(e)$ is the weight of the minimum weight edge
crossing the cut $(V_0,V_1)$, which, by the
{\em cut property} of MSTs, must be
included in $\MST(G\backslash T)$.  To compute the sensitivity of
edges in $T$ we alternate between {\em sparsifying}
and {\em condensing} steps.

\paragraph{Sparsifying.} To sparsify we simply compute $\MST(G\backslash T)$ and solve
the MST sensitivity analysis problem recursively on $T\cup \MST(G\backslash T)$.
This yields an instance with $n$ vertices and at most $2n-2$
edges.

\paragraph{Condensing.} 
In the condensing step we reduce the number of vertices,
and thereby increase the effective density of the graph. 
Let $P = (v_1,v_2,\ldots,v_{k-1},v_k)$ be a path in $T$ satisfying the following criteria.
\begin{enumerate}
\item[(i)] $v_2,\ldots,v_{k-1}$ have degree 2 in $T$,
\item[(ii)] $v_1$ has degree one (it is a leaf) or degree at least three in $T$.
\item[(iii)] $v_k$ has degree at least three in $T$ (unless $T$ is itself a single path, in which case $P=T$ and $v_k$ is the other leaf.)
\end{enumerate}
The condensing step considers all such paths simultaneously.  We focus on one such $P$.

Call the vertices with degree one or two in $P$ {\em interior}, that is, $\{v_2,\ldots,v_{k-1}\}$ are interior
and $v_1$ is as well if it is a leaf.
All non-tree edges can be classified as {\em internal}, if both endpoints are in $P$, {\em straddling},
if one endpoint is interior to $P$ and the other outside of $P$,
and {\em external}, 
if one endpoint is not in $P$ and the other is not interior to $P$.
We eliminate each straddling edge $(u,v_i)$ by replacing it with two edges.
Without loss of generality suppose $u$ is closer to $v_k$ than $v_1$.  Replace
$(u,v_i)$ with $\{(u,v_k), (v_k,v_i)\}$, where the new edges have the same weight as $(u,v_i)$.
This transformation clearly does not affect the sensitivity of tree edges.  Note that if $v_1$ is a leaf,
it is only incident to internal non-tree edges.

We compute the sensitivity of each edge in $P$ by solving two subproblems, one for internal edges and the other on external edges.
Calculating the sensitivities with respect to internal edges can be done in linear time.  This is an easy exercise.
We form an instance of MST sensitivity analysis that consists of all external edges
and a contracted tree $T'$, obtained as follows.  
If $v_1$ is a leaf we discard $v_1,\ldots,v_{k-1}$.  If $v_1$ and $v_k$ both have degree at least three in $T$
then we replace $P$ with a single edge $(v_1,v_k)$ and discard $\{v_2,\ldots,v_{k-1}\}$.
The only vertices that remain had degree at least three in $T$, hence $|T'| < n/2$.
After both subproblems are solved $\sens(v_i,v_{i+1})$
is the minimum of $\sens(v_i,v_{i+1})$ in the internal subproblem
and $\sens(v_1,v_k)$ in the external subproblem, assuming $(v_1,v_k)$ exists in $T'$.

Define $T(m,n)$ to be the running time of this recursive MST sensitivity analysis algorithm, 
where $m$ is the number of non-tree edges and $n$ the number of vertices.
Define $\MST^\star(m,n)$ to be decision-tree complexity of the MST problem on arbitrary 
$m$-edge $n$-vertex graphs, which is the running time of the Pettie-Ramachandran algorithm~\cite{PR02c}.
The sparsifying step reduces the number of non-tree edges to $n-1$ in $O(\MST^\star(m,n))$ time
and the condensing step reduces the number of vertices by half, in $O(m+n)$ time.
We have shown that
\[
T(m,n) \,=\, O(\MST^\star(m,n)) + O(m+n) + T(n-1,(n-1)/2).
\]

The $\MST^\star(m,n)$ function is unknown, of course,
but it is simple to prove $T(m,n) = O(\MST^\star(m,n))$ given
the inequalities $\MST^\star(m,n) = \Omega(m)$ and $\MST^\star(m',n') < \MST^\star(m,n)/2$ for $m'<m/2$ and $n'<n/2$.
See Pettie and Ramachandran \cite{PR02c} for proofs
of these and other properties of $\MST^\star(m,n)$.

\section{A Faster Split-Findmin Structure}\label{sect:sf}

In this section we present a relatively simple
split-findmin data structure that runs in $O(n + m\log\alpha(m,n))$
time, where $n$ is the length of the initial sequence and
$m$ the number of operations.  
Our structure borrows many ideas from Gabow's \cite{G85} original
split-findmin data structure, whose execution time is
$O((n+m)\alpha(m,n))$ time.

The analysis makes use of Ackermann's function
and its row and column inverses:
\[
\begin{array}{rcl@{\hcm[.5]}l}
A(1,j)      & = & 2^j                         & \mbox{for $j\ge 1$}\\
A(i,1)      & = & 2                           & \mbox{for $i>1$}\\
A(i+1,j+1)  & = & A(i+1,j)\cdot A(i,A(i+1,j)) & \mbox{for $i,j\ge 1$}
\end{array}
\]

\[
\lambda_i(n) = \min\{j \,:\, A(i,j) > n\}
\mbox{\hcm[.5]and\hcm[.5]}
\alpha(m,n) = \min\{i \,:\, A(i,\ceil{\fr{2n+ m}{n}}) > n\}
\]

The definition of split-findmin from Section \ref{sect:sensitivity}
says that all $\kappa$-values are initially set to $\infty$. 
Here we consider a different version where $\kappa$-values are given.
The asymptotic complexity of these two versions is the same, of course.
However in this section we pay particular attention to the constant
factors involved.

\begin{lemma}\label{lem:sfbasis}
There is a split-findmin structure such that
decreasekeys require $O(1)$ time and $3$ comparisons
and other operations require $O(n\log n)$ time in total
and less than $3n\log n - 2n$ comparisons.
\end{lemma}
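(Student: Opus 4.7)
The plan is to adopt the simplest possible representation: each sequence is stored as a doubly linked list of its elements together with a header object that records the sequence's current minimum $\kappa$-value (and a pointer to the argmin achieving it), and every element carries a direct pointer back to the header of the sequence currently containing it. With this representation, $\initialize$ builds one doubly linked list with one header and computes its minimum by a single left-to-right scan using $n-1$ comparisons and $O(n)$ time; $\findmin(e)$ dereferences $e$'s header pointer and reads off the stored minimum in $O(1)$ time and $0$ comparisons; and $\decreasekey(e,w)$ uses at most three comparisons---one against $\kappa(e)$ (updating it if $w$ is smaller), one against the header's stored minimum (updating it if smaller), and a third that maintains a compact auxiliary datum in the header used by splits---all in $O(1)$ time.

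The substantive operation is $\splitseq(e_i)$, for which I would use the classical \emph{work-on-the-smaller-half} charging scheme. Starting from $e_i$ and $e_{i-1}$, I walk outward in both directions of the doubly linked list in lockstep, one step per round, until one walk reaches an endpoint; this identifies the smaller half and its size $s$ in $O(s)$ comparison-free steps. I then allocate a fresh header for that half and make a single pass through its $s$ elements, redirecting each element's header pointer to the new header and accumulating the new header's minimum at a cost of $s-1$ comparisons and $O(s)$ time. Because an element's header pointer is redirected only when it lies in the smaller side of a split, its current sequence size at least halves at each redirection, and so each of the $n$ elements is redirected at most $\lceil\log_2 n\rceil$ times over the lifetime of the structure. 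Summing over all splits, the total time spent on these scans is $O(n\log n)$ and the total number of comparisons they incur is at most $n\lceil\log_2 n\rceil$. Combining this with the $n-1$ comparisons from $\initialize$ and $O(1)$ comparisons per split for bookkeeping, a careful count yields the claimed bound of strictly fewer than $3n\log n - 2n$ comparisons.

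The main obstacle---and the place where the proof requires genuine care---is the update of the \emph{old} header's minimum after a split in the case where the old argmin has migrated into the smaller half: a naive rescan of the possibly much larger remaining half would blow the amortization up to $\Theta(n^2)$ in the worst case. To avoid this, I would have each header maintain, alongside its minimum, a compact backup---for example, a short sorted list of the next few smallest $\kappa$-values of elements in the sequence---that is refreshed by the third comparison in $\decreasekey$ and by the smaller-half scan already performed during each split, so that a replacement minimum for the old header can always be certified in $O(1)$ additional time. Verifying that this backup is adequately populated whenever it is consulted, and that it can be maintained within the stated per-operation comparison budget, is the delicate step of the proof.
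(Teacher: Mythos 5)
The central idea of your proposal---doubly linked lists with a header storing the current minimum, and a smaller-half charging argument for splits---differs from the paper's construction, which partitions each sequence into contiguous blocks of power-of-two sizes arranged in bitonic order, with a min-pointer at both the block level and the sequence level. Your amortization (each element is redirected at most $\lceil\log_2 n\rceil$ times) is sound as far as it goes, but there is a genuine gap exactly where you flagged uncertainty: recomputing the minimum of the \emph{larger} half after a split when the old argmin falls in the smaller half. A ``short sorted list of the next few smallest $\kappa$-values'' cannot do the job. Consider the sequence with $\kappa(e_i)=i$ and repeatedly split off the first element: every split removes the current minimum and leaves the larger half without a known argmin, so any $O(1)$-size backup is exhausted after $O(1)$ splits. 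One cannot replenish the backup from the smaller-half scan (those elements leave the sequence), nor from a single extra comparison per decreasekey (that cannot maintain a sorted list of the $\omega(1)$ candidates the adversary will force you to consult). The time needed to recompute the larger half's minimum from scratch is $\Theta(n)$ in the worst case, which breaks the $O(n\log n)$ total bound.

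What the paper's block structure buys is precisely the missing intermediate layer of aggregation: since a sequence always consists of at most $2\lfloor\log n\rfloor$ blocks (a consequence of the bitonic size ordering), the surviving sequence's minimum can be recomputed from the stored block minima in $O(\log n)$ comparisons, for $O(n\log n)$ total over all splits. This is not an optional optimization; it is what makes the ``examine only the other sequence'' step cheap. To repair your proposal along its own lines you would need to introduce such a per-sequence logarithmic-size summary (for instance, block minima or a balanced structure over them), at which point you have essentially rediscovered the paper's construction. As written, the third comparison you budget in $\decreasekey$ and the ``compact backup'' are load-bearing but unspecified, and the concrete plan does not close the hole.
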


\begin{proof}
{\em The Algorithm.} 
Each sequence is divided into a set of contiguous blocks of elements.
All block sizes are powers of two and in any given sequence the
blocks are arranged in bitonic order.  From left to right the block 
sizes are strictly increasing then strictly decreasing, where the two
largest blocks may have the same size.  We maintain that each block
keeps a pointer to its minimum weight constituent element.
Similarly, each sequence keeps a pointer to its minimum weight element.
Executing a findmin is trivial.  Each decreasekey operation 
updates the key of the given element, the min-pointer of its block, 
and the min-pointer of its sequence.  Suppose we need to execute 
a split before element $e_i$, which lies in block $b$.  Unless $e_i$ is
the first element of $b$ (an easy case)
we destroy $b$ and replace it with a set of smaller blocks.
Let $b = (e_j,\ldots,e_{i-1},e_i,\ldots,e_k)$.  We scan $(e_j,\ldots,e_{i-1})$
from left to right, dividing it into at most $\log(k-j)$ blocks of decreasing size.
Similarly, we scan $(e_i,\ldots,e_k)$ from right to left, dividing it into smaller blocks.
One can easily see that this procedure preserves the bitonic order of blocks in each sequence.
To finish we update the min-pointers in each new block and new sequence.
Since one of the new sequences inherits the minimum element
from the old sequence we need only examine the other.

{\em Analysis.} It is clear that findmin and decreasekey require zero comparisons
and at most three comparisons, respectively, and that both require $O(1)$ time.
Over the life of the data structure each element belongs to at most
$\ceil{\log(n+1)}$ different blocks.  Initializing all blocks takes $O(n\log n)$
time and $\sum_{i=0}^{\floor{\log n}}\floor{\f{n}{2^i}}(2^i-1) \le n\floor{\log n} - n + 1$
comparisons.  It follows from the bitonic order of the blocks that each sequence
is made up of at most $2\floor{\log n}$ blocks; thus, updating the min-pointers
of sequences takes at most $2n\floor{\log n} - n$ comparisons in total.
\end{proof}

Note that the algorithm proposed in Lemma \ref{lem:sfbasis} is already
optimal when the number of decreasekeys is $\Omega(n\log n)$.
Lemma~\ref{lem:sfinduction} shows that any split-findmin solver can
be systematically transformed into another with substantially cheaper
splits and incrementally more expensive decreasekeys.  This is
the same type of recursion used in \cite{G85}.

\begin{lemma}\label{lem:sfinduction}
If there is a split-findmin structure that
requires $O(i)$ time and $2i+1$ comparisons per decreasekey,
and $O(in\lambda_i(n))$ time and
$3in\lambda_i(n)$ comparisons for all other operations,
then there is also a split-findmin structure with parameters
$O(i+1),2i+3,O((i+1)n\lambda_{i+1}(n)),$ and $3(i+1)n\lambda_{i+1}(n)$.
\end{lemma}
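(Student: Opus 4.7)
The plan is a two-level decomposition in the style of Gabow's recursion~\cite{G85}. Cut the $n$-element sequence into contiguous macro-blocks of size $B := A(i,\lambda_{i+1}(n))$, so that $\lambda_i(B) \le \lambda_{i+1}(n)+1$, and instantiate the hypothesized level-$i$ structure inside each macro-block. Above the macro-blocks, maintain a coarse layer in the style of Lemma~\ref{lem:sfbasis}: its elements are the block representatives, and its job is to track, for each outer sequence (a maximal run of macro-blocks not yet separated by a split), the minimum of its representatives. Findmin$(e)$ then reads off the coarse minimum of $e$'s outer sequence. Decreasekey$(e,w)$ first invokes the inductive Decreasekey on $e$ inside its macro-block at a cost of $2i+1$ comparisons and $O(i)$ time; if the inner routine reports that the block's minimum has actually changed to $w$, we propagate by comparing $w$ against the representative's current value and, if smaller, against the outer sequence's minimum, yielding at most two further updates. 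Split$(e)$ either lies on a macro-block boundary (in which case only the coarse layer is cut), or inside a macro-block, in which case the inductive Split decomposes that block into a bitonic stack of smaller sub-blocks exactly as in Lemma~\ref{lem:sfbasis}, after which the coarse layer is cut between them.

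For the aggregate cost, the inductive hypothesis contributes $(n/B)\cdot O(iB\lambda_i(B)) = O(in\lambda_{i+1}(n))$ time and $(n/B)\cdot 3iB\lambda_i(B) \le 3in\lambda_{i+1}(n)$ comparisons to the non-decreasekey work inside the blocks. The coarse layer on $N := \lceil n/B\rceil$ representatives adds $O(N\log N)$ time and at most $3N\log N$ comparisons; the Ackermann recurrence $A(i+1,j+1) = A(i+1,j)\cdot A(i,A(i+1,j))$ relating the $(i+1)$st and $i$th rows bounds $N\log N$ by $n\lambda_{i+1}(n)$, so the two contributions fit inside the target $3(i+1)n\lambda_{i+1}(n)$.

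The tightest part of the argument is hitting exactly $2i+3$ comparisons per decreasekey, since the coarse layer has only two comparisons to spare. This forces two design choices. First, the representative must not store a key field separately from the inner block's minimum; otherwise the ``element-key update'' of Lemma~\ref{lem:sfbasis} would require a third comparison. Second, the test ``did the block minimum change?'' must be returned as a by-product of the inner Decreasekey rather than incurring a fresh comparison. The secondary technical point is calibrating $B$ so that both $3in\lambda_i(B)$ and $3N\log N$ fit inside the envelope $3(i+1)n\lambda_{i+1}(n)$; this is where the two-row Ackermann recurrence is essential.
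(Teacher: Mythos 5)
The uniform-block decomposition cannot be made to hit the stated bounds; the gap is structural, not a matter of tuning constants. The fatal claim is that ``the Ackermann recurrence \dots bounds $N\log N$ by $n\lambda_{i+1}(n)$.'' It does not. Take $i=1$ with your $B = A(1,\lambda_2(n)) = 2^{\lambda_2(n)}$. Then $N = n/B$ and $\log N = \log n - \lambda_2(n)$, so the coarse layer's non-decreasekey cost is $\Theta(N\log N) = \Theta\bigl((n\log n)/2^{\lambda_2(n)}\bigr)$, whereas the budget available for it is $\Theta(n\lambda_2(n))$. The ratio is $\Theta\bigl(\log n/(2^{\lambda_2(n)}\lambda_2(n))\bigr)$, and since $\lambda_2(n)=\Theta(\log^* n)$ the denominator is quasi-constant while $\log n \to\infty$: the coarse layer is eventually unboundedly over budget. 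Pushing $B$ larger to shrink $N\log N$ only moves the overspend, because $\lambda_i(B)$ grows and the inner term $3in\lambda_i(B)$ then exceeds $3in\lambda_{i+1}(n)$; there is no single block size for which both halves fit simultaneously. (A secondary slip: with $B=A(i,\lambda_{i+1}(n))$ one has $\lambda_i(B)=\lambda_{i+1}(n)+1$ exactly, so the inner contribution is $3in(\lambda_{i+1}(n)+1)$, not ``$\le 3in\lambda_{i+1}(n)$''; that alone already overshoots by $3in$.)

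What makes the paper's argument work is that it abandons both the uniform block and the Lemma~\ref{lem:sfbasis}-style coarse layer. Each sequence is carved into at most $2(\lambda_{i+1}(n)-1)$ \emph{plateaus} of geometrically increasing levels: a level-$j$ plateau has fewer than $A(i,A(i+1,j))$ blocks, each of size exactly $A(i+1,j)$, and the plateaus are arranged bitonically so that every split shatters a single block into strictly lower-level pieces. The ``coarse'' bookkeeping is just a linear scan over the $O(\lambda_{i+1}(n))$ plateau findmins whenever a sequence minimum must be recomputed --- no $N\log N$ term ever appears. The inner $\sfm_i$ contribution then telescopes precisely because $\lambda_i\bigl(A(i,A(i+1,j))-1\bigr) = A(i+1,j)$: each underlying element pays $3i$ comparisons to $\sfm_i$ per level and lives on $\lambda_{i+1}(n)-1$ levels, giving $3in(\lambda_{i+1}(n)-1)$ total. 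This joint calibration of plateau block \emph{size} and block \emph{count} from the same recurrence $A(i+1,j+1)=A(i+1,j)\cdot A(i,A(i+1,j))$ is what balances rows $i$ and $i+1$, and a single $B$ cannot replicate it. Your decreasekey path ($2i+1$ inner comparisons plus two propagation comparisons, with the representative's key identified with the inner block minimum) is sound and matches the paper, but the split and initialization accounting does not hold up.
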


\begin{proof}
Let $\sfm_i$ and $\sfm_{i+1}$ be the assumed and derived
data structures.  
At any moment in its execution $\sfm_{i+1}$ treats
each sequence of length $n'$ as the concatenation of at most
$2(\lambda_{i+1}(n')-1)$ {\em plateaus} and at most 2 singleton elements, 
where a level $j$ plateau is partitioned into less than
$A(i+1,j+1)/A(i+1,j) = A(i,A(i+1,j))$ 
{\em blocks} of size exactly $A(i+1,j)$.  
In each sequence the plateaus are arranged in a bitonic order,
with at most two plateaus per level.
See Figure~\ref{fig:plateau} for a depiction with 6 plateaus.

\begin{figure}[ht]
\begin{center}
\psfrag{A(i+1,1)=2}{\large $A(i+1,1)=2$}
\psfrag{< A(i,2) blocks}{\large $< A(i,2)$ blocks}
\psfrag{A(i+1,2)}{\Large $A(i+1,2)$}
\psfrag{< A(i,A(i+1,2))}{\Large $< A(i,A(i+1,2))$}
\psfrag{A(i+1,3)}{\Large $A(i+1,3)$}
\psfrag{< A(i,A(i+1,3))}{\Large $< A(i,A(i+1,3))$}
\psfrag{blocks}{\Large blocks}
\psfrag{singleton}{\large singleton}
\scalebox{.6}{\includegraphics{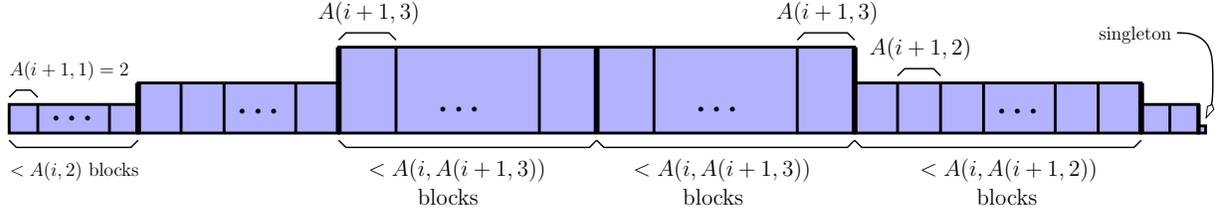}}
\end{center}
\caption{\label{fig:plateau}A sequence divided into six 
plateaus and one singleton.  The number of blocks
in a level $j$ plateau is less than $A(i+1,j+1)/A(i+1,j) = A(i,A(i+1,j))$.}
\end{figure}

At initialization $\sfm_{i+1}$ scans the whole sequence, partitioning
it into at most $\lambda_{i+1}(n)-1$ plateaus and at most one singleton.
Each plateau is managed by $\sfm_i$ as a separate instance of split-findmin,
where elements of $\sfm_i$ correspond to plateau blocks and the key of an element
is the minimum among the keys of its corresponding block.  We associate with each
plateau a pointer to the sequence that contains it.

Every block and sequence 
keeps a pointer to its minimum element.  Answering $\findmin$ queries
clearly requires no comparisons.
To execute a $\decreasekey(e,w)$ we spend one comparison
updating $\kappa(e) \leftarrow \min\{\kappa(e),w\}$ and another updating
the sequence minimum.  If $e$ is not a singleton then it is contained
in some block $b$.  We finish by calling $\decreasekey(b,w)$, where
the $\decreasekey$ function is supplied by $\sfm_i$.
If $\sfm_i$ makes $2i+1$ comparisons then $\sfm_{i+1}$ makes
$2i+3$, as promised.

Consider a split operation that divides a level $j$
block $b$ in plateau $p$.  (The splits that
occur on the boundaries between blocks and plateaus are much simpler.)
Using the $\splitseq$ operation given by $\sfm_i$, 
we split $p$ just before and after the element corresponding to $b$.
Let $b_0$ and $b_1$ be the constituent elements of $b$ to the left and right of the
splitting point.  We partition $b_0$ and $b_1$ into blocks and
plateaus (necessarily of levels less than $j$)  just as in 
the initialization procedure.  Notice that to retain the bitonic
order of plateaus we scan $b_0$ from left to right and $b_1$
from right to left.  One of the two new sequences inherits the
minimum element from the original sequeunce.  We find the minimum
of the other sequence by taking the minimum over each of its 
plateaus---this uses
$\sfm_i$'s $\findmin$ operation---and the at most two singleton elements.

The comparisons performed in split operations can be divided into
(a) those used to find block minima, (b) those used to find 
sequence minima, and (c) those performed by $\sfm_i$.
During the execution of the data structure each element appears 
in at most $\lambda_{i+1}(n)-1$ blocks.  Thus, the number of
comparisons in 
(a) is $\sum_{j\ge 1}^{\lambda_{i+1}(n)-1} (n-n/A(i+1,j))$,
which is less than $n(\lambda_{i+1}(n) - 1.5)$ since 
$A(i+1,1)=2$ for all $i$.  For (b) the number is 
$n(2\lambda_{i+1}(n) -1)$ since in any sequence 
there are at most $2(\lambda_{i+1}(n)-1)$ plateaus and 2 singletons.
For (c), notice that every element corresponding to
a block of size $A(i+1,j)$ appears in an instance of $\sfm_i$ with
less than $A(i+1,j+1)/A(i+1,j) = A(i,A(i+1,j))$ elements.  Thus the number
contributed by (c) is:
\[
\sum_{1\le j < \lambda_{i+1}(n)} \f{3in\lambda_i(A(i,A(i+1,j)) -1)}{A(i+1,j)}
= 3in(\lambda_{i+1}(n) -1)
\]
Summing up (a)--(c), the number of comparisons performed outside
of $\decreasekey$s is less than $3(i+1)n\lambda_{i+1}(n)$.
\end{proof}

\begin{theorem}\label{thm:sf}
There is a split-findmin structure that performs
$O(m\log\alpha(m,n))$ comparisons.  On a pointer machine
it runs in $O((m+n)\alpha(m,n))$ time and on a random
access machine it runs in $O(n + m\log\alpha(m,n))$ time,
where $n$ is the length of the original sequence and
$m$ the number of decreasekeys.
\end{theorem}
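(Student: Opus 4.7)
The plan is to chain Lemma \ref{lem:sfbasis} as the base case with repeated application of Lemma \ref{lem:sfinduction}. Lemma \ref{lem:sfbasis} exactly matches the hypothesis of Lemma \ref{lem:sfinduction} at $i=1$: it yields a structure with $O(1)$ time and $3 = 2\cdot 1 + 1$ comparisons per decreasekey, and $O(n\log n) = O(1\cdot n\lambda_1(n))$ time with fewer than $3\cdot 1\cdot n\lambda_1(n)$ comparisons for the remaining operations. Iterating Lemma \ref{lem:sfinduction} therefore produces, for every $i\ge 1$, a structure $\sfm_i$ whose decreasekey costs $O(i)$ time and $2i+1$ comparisons and whose initializations and splits together cost $O(in\lambda_i(n))$ time and $3in\lambda_i(n)$ comparisons.

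For the pointer machine bound, I would set $i = \alpha(m,n)$. The definition $\alpha(m,n) = \min\{i : A(i,\lceil(2n+m)/n\rceil) > n\}$ gives $\lambda_\alpha(n) \le \lceil(2n+m)/n\rceil$, hence $n\lambda_\alpha(n) = O(n+m)$. Summing: the $m$ decreasekeys cost $O(m\alpha)$ time, initializations and splits cost $O(\alpha\cdot n\lambda_\alpha(n)) = O(\alpha(n+m))$ time, for a total pointer machine running time of $O((m+n)\alpha(m,n))$ as claimed.

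For the sharper $O(m\log\alpha)$ comparison count and the $O(n + m\log\alpha)$ RAM running time, the straight linear descent through all $\alpha$ levels of nested $\sfm_i$ structures has to be replaced by a binary search. Within $\decreasekey(e,w)$, the block minima along the nesting chain from $e$ up to its sequence root are non-increasing as one climbs, so the set of levels on which $w$ strictly beats the current min is an initial segment $\{1,\ldots,j^\star\}$; I would find $j^\star$ by a binary search performing $O(\log \alpha)$ comparisons, and only then push updates down through levels $1,\ldots,j^\star$. On a RAM, the $\alpha$-entry chain of block mins can be packed into $O(1)$ words, so $j^\star$ is located in $O(\log\alpha)$ time (or $O(1)$ with a small table, since $\alpha(m,n)$ is tiny). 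A symmetric word-packing of the plateau structure used in initialization and splits brings the non-decreasekey work down to $O(n+m)$ RAM time, leaving $O(n+m\log\alpha)$ overall.

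The hard step will be to confirm that this binary-searched decreasekey remains compatible with the multi-level invariants of Lemma \ref{lem:sfinduction}: skipping all levels above $j^\star$ must not invalidate any block- or plateau-minimum pointer subsequently inspected by $\findmin$ or $\splitseq$, and the $O(\log\alpha)$ comparison budget must survive the interaction with the $\sfm_i$-internal updates, which already do $\Theta(\alpha)$ comparisons in the unoptimized analysis. Verifying these invariants, and the RAM-level claim that packed block representations let splits and sequence-min recomputations be done in $O(1)$ per affected word rather than $O(\lambda_\alpha(n))$, is where the bulk of the technical work would lie.
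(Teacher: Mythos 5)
Your chaining of Lemma~\ref{lem:sfbasis} into Lemma~\ref{lem:sfinduction}, the derivation of the $O((m+n)\alpha(m,n))$ pointer-machine bound, and the binary-search reorganization of $\decreasekey$ (exploiting the monotonicity of minima along the nesting chain $E_\alpha \subseteq \cdots \subseteq E_0 \subseteq S_1 \subseteq \cdots \subseteq S_\alpha$) all match the paper. The gap is the jump from there to $O(m\log\alpha)$ \emph{comparisons}. Binary search only cuts the decreasekey comparisons; the splits and initialization of $\sfm_\alpha$ still perform $\Theta(\alpha \cdot n\lambda_\alpha(n))$ key comparisons, leaving an $O(n\alpha(m,n) + m\log\alpha(m,n))$ total. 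Word-packing cannot absorb the leftover $n\alpha$ term because a comparison count is a decision-tree quantity: packing blocks into machine words and tabulating transitions accelerates bookkeeping and lookup \emph{time}, but every key comparison, whether made online or encoded in a precomputed table, still counts. So your proposal establishes the pointer-machine time bound but not the stated comparison bound, and without that the RAM time bound does not follow from packing alone.

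The paper eliminates the $n\alpha$ term with a construction you omit: a wrapper $\sfm_i^*$ that first chops the $n$-element sequence into blocks of size $i$, keeps each block's keys in sorted order, and runs $\sfm_i$ only on the $n/i$ block summaries (each unbroken block is treated as a single element of $\sfm_i$). Splits then cost $O(i\cdot(n/i)\lambda_i(n/i)) = O(n\lambda_i(n))$ comparisons in total, and each $\decreasekey$ costs $O(\log i)$ comparisons (binary search to re-sort its block plus the already binary-searched $\sfm_i$ chain). Setting $i = \alpha(m,n)$ gives split comparisons $O(n\lambda_\alpha(n)) = O(n+m)$ and decreasekey comparisons $O(m\log\alpha)$, hence $O(m\log\alpha)$ overall. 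Only after this comparison bound is in place does the paper apply RAM precomputation (on $\log\log n$-sized blocks with tabulated $\sfm_\alpha^*$ transitions) to bring the running time down to $O(n + m\log\alpha(m,n))$. The $\sfm_i^*$ layer is the essential step missing from your argument.
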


\begin{proof}
In conjunction, Lemmas \ref{lem:sfbasis} and \ref{lem:sfinduction}
prove that $\sfm_\alpha$ runs in
$O(\alpha(m,n)n\lambda_{\alpha(m,n)}(n) + m\alpha(m,n))$ time,
which is $O((m+n)\alpha(m,n))$ since 
$\lambda_{\alpha(m,n)}(n) = O(1 + m/n)$.
We reduce the number of comparisons in two ways, 
then improve the running time.
Suppose we are performing a $\decreasekey$
on some element $e$.  
In every $\sfm_i$ $e$ is represented in at most
one element and sequence.  Let $E_i$ and $S_i$ be
the element and sequence containing $e$ in $\sfm_i$, i.e.,
$E_i$ and $S_i$ correspond to a set of elements that include $e$.
Let $E_0$ be the block containing $e$ in $\sfm_1$.  
If we assume for simplicity that none of
$E_\alpha,E_{\alpha-1},\ldots,E_1$ correspond to singletons,
then one can easily see that
\[
\{e\} = E_{\alpha} \subseteq E_{\alpha-1} 
\subseteq \cdots \subseteq E_1 \subseteq E_0 \subseteq S_1 
\subseteq S_2 \subseteq \cdots \subseteq S_{\alpha-1} 
\subseteq S_\alpha = S(e)
\]
\noindent and therefore that
\[
E_{\alpha} \ge \min E_{\alpha-1} \ge \cdots \ge \min E_1 \ge \min E_0 \ge \min S_1 \ge \cdots 
\ge \min S_\alpha = \min S(e).
\]
Thus a decreasekey on $e$ can only affect some {\em prefix}
of the the min-pointers in $E_\alpha,\ldots,E_1,E_0,S_1,\ldots,S_\alpha$.
Using a binary search this prefix can be determined and updated
in $O(\alpha)$ time but with only $\ceil{\log(2\alpha+2)}$ comparisons.
We have reduced the number of comparisons to 
$O(n\alpha(m,n) + m\log\alpha(m,n))$.  To get rid of the 
$n\alpha(m,n)$ term we introduce another structure $\sfm_i^*$.
Upon initialization $\sfm_i^*$ divides the full sequence into
blocks of size $i$.  At any point each sequence consists of 
a subsequence of unbroken blocks and possibly two partial blocks, one
at each end.  The unbroken blocks are handled by $\sfm_i$, where
each is treated as a single element.  $\sfm_i^*$ maintains
the keys of each block in sorted order.  Findmins are easy
to handle, as are splits, which, in total, require 
$O(i(n/i)\lambda_i(n/i))=O(n\lambda_i(n))$
comparisons and $O(in + n\lambda_i(n))$ time.  The routine for 
decreasekeys requires $O(i)$ time and $O(\log i)$ comparisons.
If element $e$ lies in block $b$ then $\decreasekey(e,w)$
calls the $\sfm_i$ routine $\decreasekey(b,w)$ then updates
the sorted order of block $b$.

The $\sfm_\alpha^*$ data structure runs on a pointer machine
in $O((n+m)\alpha(m,n))$ time.  To speed it up we use the
standard RAM technique of precomputation.  The initial sequence
is divided into blocks of width $\log\log n$.  
$\sfm_2$ handles all subsequences of unbroken blocks
in $O(n\lambda_2(n/\log\log n)/\log\log n + m) = O(m) + o(n)$ time,
where $\lambda_2(n) \le \logstar n$.  Each individual block is handled
by a precomputed version of $\sfm_\alpha^*$.  We represent the
state of $\sfm_\alpha^*$ on instances of size $\log\log n$
with $o((\log\log n)^2)$ bits, which easily fits into one machine word. 
(This is an easy exercise; see \cite{DRT92,B+98,PR-usssp}.)
In $o(n)$ time we precompute the behavior of $\sfm_\alpha^*$
in a transition table.  Each entry in the table corresponds to a state
and contains $3\log\log n$ precomputed decision trees: one for each operation
(split, findmin, or decreasekey) applied to each of $\log\log n$ locations.
In the case of findmin the decision tree is trivial; it simply returns the
location of the minimum element in the given sequence.
The leaves of the decision trees for split and decreasekey operations 
point back to the appropriate entry in the transition table.  
Thus, on $\log\log n$-sized blocks the running time of $\sfm_\alpha^*$ 
is asymptotic to its comparison complexity.  
The overall running time of the data structure is $O(n + m\log\alpha(m,n))$.
\end{proof}

The use of word-packing RAM techniques is undesirable but completely 
unavoidable.  LaPoutre \cite{LaP96} has proved that on a pointer
machine, split-find requires $\Omega(m\alpha(m,n))$ time.
One can easily reduce split-find to split-findmin.
Pettie and Ramachandran \cite[Appendix B]{PR-usssp} observed that the precomputation
technique could be taken a step further.  Rather than encode the algorithm
$\sfm_\alpha^*$ we could first perform a brute force search for
the {\em optimal} split-findmin data structure, still in $o(n)$ time, 
and encode {\em it} instead.  The overall 
running time of this algorithm is still
$O(n + m\log\alpha(m,n))$ but might be asymptotically faster.

\paragraph{Acknowledgement.} I thank Bob Tarjan for encouraging me to finally publish the complete version
of this paper.


\begin{thebibliography}{10}

\bibitem{AS87}
A.~Alon and B.~Schieber.
\newblock Optimal preprocessing for answering on-line product queries.
\newblock Technical Report TR-71/87, Institute of Computer Science, Tel Aviv
  University, 1987.

\bibitem{BF-C00}
M.~A. Bender and M.~Farach-Colton.
\newblock The {LCA} problem revisited.
\newblock In {\em Proceedings 4th Latin American Symp. on Theoretical
  Informatics (LATIN), LNCS Vol. 1776}, pages 88--94, 2000.

\bibitem{B+98}
A.~L. Buchsbaum, H.~Kaplan, A.~Rogers, and J.~R. Westbrook.
\newblock Linear-time pointer-machine algorithms for {{LCA}s}, {MST}
  verification, and dominators.
\newblock In {\em Proc.~30th {ACM} Symposium on Theory of Computing ({STOC})},
  pages 279--288, May~23--26 1998.

\bibitem{Chaz87}
B.~Chazelle.
\newblock Computing on a free tree via complexity-preserving mappings.
\newblock {\em Algorithmica}, 2(3):337--361, 1987.

\bibitem{Chaz00a}
B.~Chazelle.
\newblock A minimum spanning tree algorithm with inverse-{Ackermann} type
  complexity.
\newblock {\em J.~ACM}, 47(6):1028--1047, 2000.

\bibitem{Chaz00b}
B.~Chazelle.
\newblock The soft heap: an approximate priority queue with optimal error rate.
\newblock {\em J.~ACM}, 47(6):1012--1027, 2000.

\bibitem{CR91}
B.~Chazelle and B.~Rosenberg.
\newblock The complexity of computing partial sums off-line.
\newblock {\em Internat. J. Comput. Geom. Appl.}, 1(1):33--45, 1991.

\bibitem{DRT92}
B.~Dixon, M.~Rauch, and R.~E. Tarjan.
\newblock Verification and sensitivity analysis of minimum spanning trees in
  linear time.
\newblock {\em SIAM J.~Comput.}, 21(6):1184--1192, 1992.

\bibitem{FredmanS89}
M.~L. Fredman and M.~Saks.
\newblock The cell probe complexity of dynamic data structures.
\newblock In {\em Proc. 21st annual {ACM} Symposium on Theory of Computing},
  pages 345--354, 1989.

\bibitem{G85}
H.~N. Gabow.
\newblock A scaling algorithm for weighted matching on general graphs.
\newblock In {\em Proceedings 26th IEEE Symposium on Foundations of Computer
  Science ({FOCS})}, pages 90--100, 1985.

\bibitem{GBT84}
H.~N. Gabow, J.~L. Bentley, and R.~E. Tarjan.
\newblock Scaling and related techniques for geometry problems.
\newblock In {\em Proceedings of the 16th Annual {ACM} Symposium on Theory of
  Computing}, pages 135--143, 1984.

\bibitem{GT85}
H.~N. Gabow and R.~E. Tarjan.
\newblock A linear-time algorithm for a special case of disjoint set union.
\newblock {\em J.~Comput.~Syst.~Sci.}, 30(2):209--221, 1985.

\bibitem{GT91}
H.~N. Gabow and R.~E. Tarjan.
\newblock Faster scaling algorithms for general graph-matching problems.
\newblock {\em J.~ACM}, 38(4):815--853, 1991.

\bibitem{GKKS93}
W.~Goddard, C.~Kenyon, V.~King, and L.~Schulman.
\newblock Optimal randomized algorithms for local sorting and set-maxima.
\newblock {\em SIAM J.~Comput.}, 22(2):272--283, 1993.

\bibitem{GH85}
R.~L. Graham and P.~Hell.
\newblock On the history of the minimum spanning tree problem.
\newblock {\em Ann. Hist. Comput.}, 7(1):43--57, 1985.

\bibitem{Hag00}
T.~Hagerup.
\newblock Improved shortest paths on the word {RAM}.
\newblock In {\em Proc. 27th Int'l Colloq. on Automata, Languages, and
  Programming ({ICALP}), {LNCS} vol. 1853}, pages 61--72, 2000.

\bibitem{Hagerup09}
T.~Hagerup.
\newblock An even simpler linear-time algorithm for verifying minimum spanning
  trees.
\newblock In {\em Proceedings 35th Int'l Workshop on Graph-Theoretic Concepts
  in Computer Science (WG)}, pages 178--189, 2009.

\bibitem{HT84}
D.~Harel and R.~E. Tarjan.
\newblock Fast algorithms for finding nearest common ancestors.
\newblock {\em SIAM J.~Comput.}, 13(2):338--355, 1984.

\bibitem{KKT95}
D.~R. Karger, P.~N. Klein, and R.~E. Tarjan.
\newblock {A randomized linear-time algorithm for finding minimum spanning
  trees}.
\newblock {\em J.~ACM}, 42:321--329, 1995.

\bibitem{Katriel+03}
I.~Katriel, P.~Sanders, and J.~L. Tr{\"{a}}ff.
\newblock A practical minimum spanning tree algorithm using the cycle property.
\newblock In {\em Proc.~11th Annual European Symposium on Algorithms, {LNCS}
  Vol.~2832}, pages 679--690, 2003.

\bibitem{K97}
V.~King.
\newblock A simpler minimum spanning tree verification algorithm.
\newblock {\em Algorithmica}, 18(2):263--270, 1997.

\bibitem{Kom85}
J.~Koml{\'{o}}s.
\newblock Linear verification for spanning trees.
\newblock {\em Combinatorica}, 5(1):57--65, 1985.

\bibitem{LaP96}
H.~LaPoutr\'{e}.
\newblock Lower bounds for the union-find and the split-find problem on pointer
  machines.
\newblock {\em J.~Comput.~Syst.~Sci.}, 52:87--99, 1996.

\bibitem{Mares08}
M.~Mares.
\newblock The saga of minimum spanning trees.
\newblock {\em Computer Science Review}, 2(3):165--221, 2008.

\bibitem{Pet02a}
S.~Pettie.
\newblock A faster all-pairs shortest path algorithm for real-weighted sparse
  graphs.
\newblock In {\em Proc. 29th Int'l Colloq. on Automata, Languages, and
  Programming ({ICALP}'02), {LNCS} vol. 2380}, pages 85--97, 2002.

\bibitem{Pet02b}
S.~Pettie.
\newblock On the comparison-addition complexity of all-pairs shortest paths.
\newblock In {\em Proc. 13th Int'l Symp. on Algorithms and Computation
  ({ISAAC})}, pages 32--43, 2002.

\bibitem{PetInvAck}
S.~Pettie.
\newblock An inverse-{A}ckermann type lower bound for online minimum spanning
  tree verification.
\newblock {\em Combinatorica}, 26(2):207--230, 2006.

\bibitem{PR02b}
S.~Pettie and V.~Ramachandran.
\newblock Minimizing randomness in minimum spanning tree, parallel connectivity
  and set maxima algorithms.
\newblock In {\em Proc.~13th~{ACM}-{SIAM} Symp.~on Discrete Algorithms
  ({SODA})}, pages 713--722, 2002.

\bibitem{PR02c}
S.~Pettie and V.~Ramachandran.
\newblock An optimal minimum spanning tree algorithm.
\newblock {\em J.~ACM}, 49(1):16--34, 2002.

\bibitem{PR-usssp}
S.~Pettie and V.~Ramachandran.
\newblock A shortest path algorithm for real-weighted undirected graphs.
\newblock {\em SIAM J.~Comput.}, 34(6):1398--1431, 2005.

\bibitem{PR-randmst08}
S.~Pettie and V.~Ramachandran.
\newblock Randomized minimum spanning tree algorithms using exponentially fewer
  random bits.
\newblock {\em ACM Trans. on Algorithms}, 4(1):1--27, 2008.

\bibitem{Tar75}
R.~E. Tarjan.
\newblock Efficiency of a good but not linear set merging algorithm.
\newblock {\em J.~ACM}, 22(2):215--225, 1975.

\bibitem{Tar79b}
R.~E. Tarjan.
\newblock Applications of path compression on balanced trees.
\newblock {\em J.~ACM}, 26(4):690--715, 1979.

\bibitem{Tar79}
R.~E. Tarjan.
\newblock A class of algorithms which require nonlinear time to maintain
  disjoint sets.
\newblock {\em J.~Comput.~Syst.~Sci.}, 18(2):110--127, 1979.

\bibitem{Tar82}
R.~E. Tarjan.
\newblock Sensitivity analysis of minimum spanning trees and shortest path
  problems.
\newblock {\em Info.~Proc.~Lett.}, 14(1):30--33, 1982.
\newblock See Corrigendum, {IPL} {\bf 23}(4):219.

\bibitem{Tho99}
M.~Thorup.
\newblock Undirected single-source shortest paths with positive integer weights
  in linear time.
\newblock {\em J.~ACM}, 46(3):362--394, 1999.

\end{thebibliography}

\end{document}